\documentclass{article}
\usepackage{fullpage}
\usepackage{graphicx,amssymb,amsmath, amsthm}
\usepackage{commath, enumitem}
\usepackage{hyperref}
\usepackage[capitalise,noabbrev]{cleveref}

\usepackage{subcaption}
\usepackage{enumitem}

\newtheorem{theorem}{Theorem}[section]
\newtheorem{cor}[theorem]{Corollary}
\newtheorem{lemma}[theorem]{Lemma}
\newtheorem{obs}[theorem]{Observation}



\crefname{subsection}{subsection}{subsections}
\crefname{obs}{Observation}{Observations}

\graphicspath{{cccg/graphics/}}


\title{Carving Polytopes with Saws in 3D}
\author{
Eliot W. Robson\thanks{Department of Computer Science,
		University of Illinois Urbana-Champaign, {\tt erobson2@illinois.edu}}
\and
Jack Spalding-Jamieson\thanks{David R. Cheriton School of Computer Science, University of Waterloo, {\tt jacksj@uwaterloo.ca}}
\and Da Wei Zheng\thanks{Department of Computer Science,
		University of Illinois Urbana-Champaign, {\tt dwzheng2@illinois.edu}}
}

\newcommand{\Int}[1]{\operatorname{int} \del{#1}}
\newcommand{\CH}[1]{\operatorname{CH} \del{#1}}

\newcommand{\R}{\mathbb{R}}

\newcommand{\Halfplanes}{\mathcal{H}}
\newcommand{\Rays}{\ensuremath{\mathcal{R}}}
\newcommand{\whp}{w.h.p.}

\newcommand{\eps}{\varepsilon}


\begin{document}
\thispagestyle{empty}
\maketitle

\begin{abstract}
We investigate the problem of 
\emph{carving} an $n$-face triangulated three-dimensional polytope
using a tool to make cuts modelled by either a half-plane or sweeps from an infinite ray.
In the case of half-planes cuts, we present a deterministic algorithm running in $O(n^2)$ time and a randomized algorithm running in $O(n^{3/2+\varepsilon})$ expected time for any $\varepsilon>0$.
In the case of cuts defined by sweeps of infinite rays, we present an algorithm running in $O(n^5)$ time.
\end{abstract}

\section{Introduction}
\label{sec:intro}

Stone carving is one of the earliest known representational works of art, and has
been known to predate even the earliest human civilization. This is the practice of taking a single solid piece of material and removing pieces until achieving a desired final shape.
To ensure durability of the finished product, the base
material is often very durable and can be difficult to carve with tools.
As a result, it is desirable to minimize the amount of work that must be done to achieve the final carving, and is useful to be able to determine what kind of objects can be carved out with the tools being used.

\subsection{2D Cutting}
The two-dimensional case of cutting material
was first studied by Overmar and Welzl \cite{OvermarsW85}. In this work, the
authors modeled cuts as lines in the plane, giving algorithms for computing
the cheapest sequence of cuts in special cases.
This was generalized by
Demaine, Demaine, and Kaplan \cite{DemaineDK01} to the case of cutting with line segments in the plane where they gave an algorithm with $2.5$ approximation factor. 
This approximation factor was later improved by Dumitrescu and Hasan \cite{DumitrescuH13}.
In 2009, Bereg, Daescu, and Jiang \cite{BeregDJ09} presented a PTAS for the  
problem of minimum length when cutting convex $n$-gons out of convex $m$-gons with straight line cuts.

An analogous problem has been studied for ray cuts \cite{DaescuL06, Tan05}, where the cutting object is a ray instead of a line or line
segment. These results focus on carving both convex and simple polygons,
and minimizing the length of cuts.
In \cref{sec:ray-cuts}, we study a more general version of the decision variant of this problem,
where we find maximum ray-carveable regions not crossing a set of disjoint polygons.

\begin{figure}
    \centering
    \includegraphics[scale=0.2]{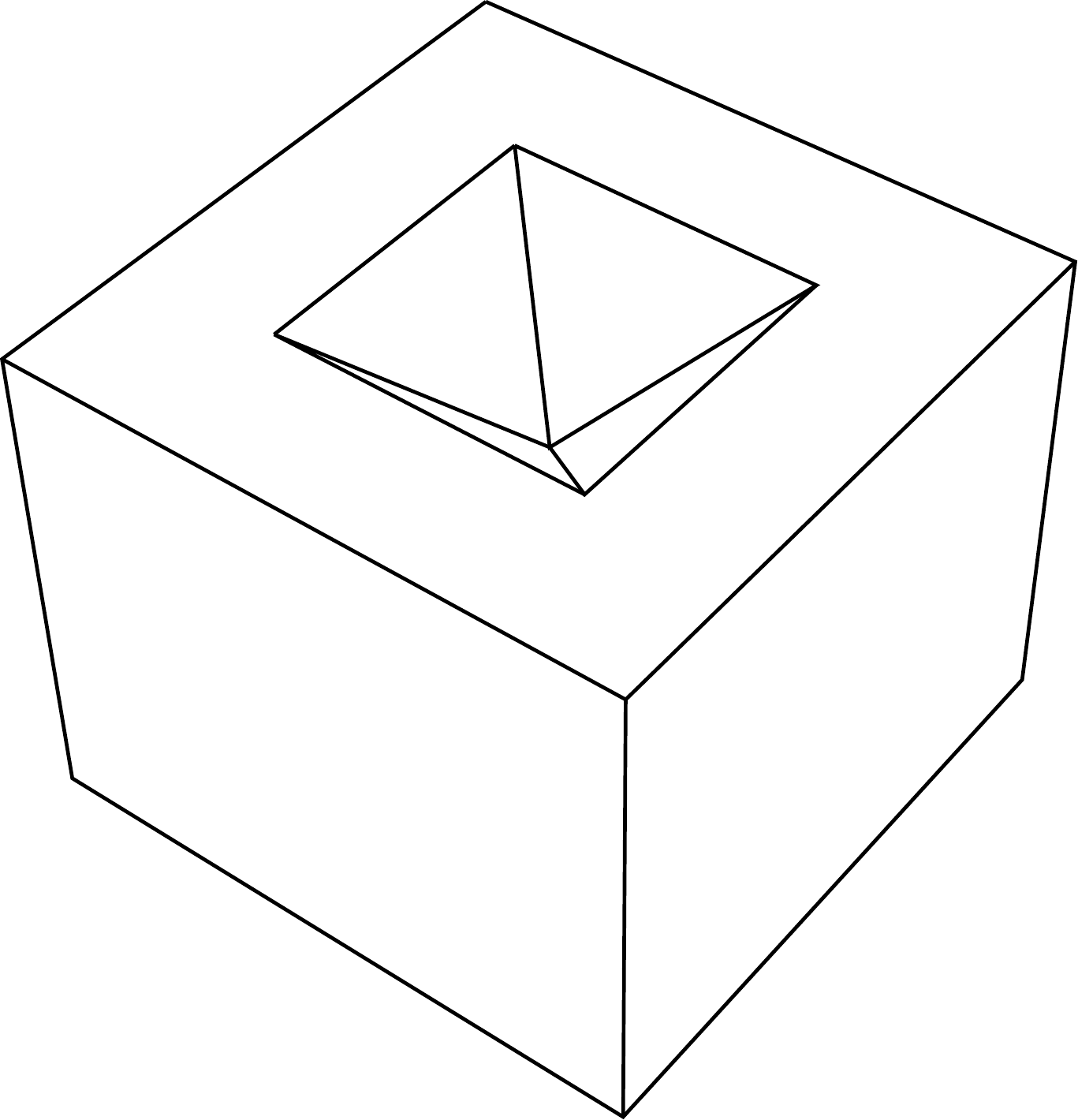}
    \qquad
    \includegraphics[scale=0.23]{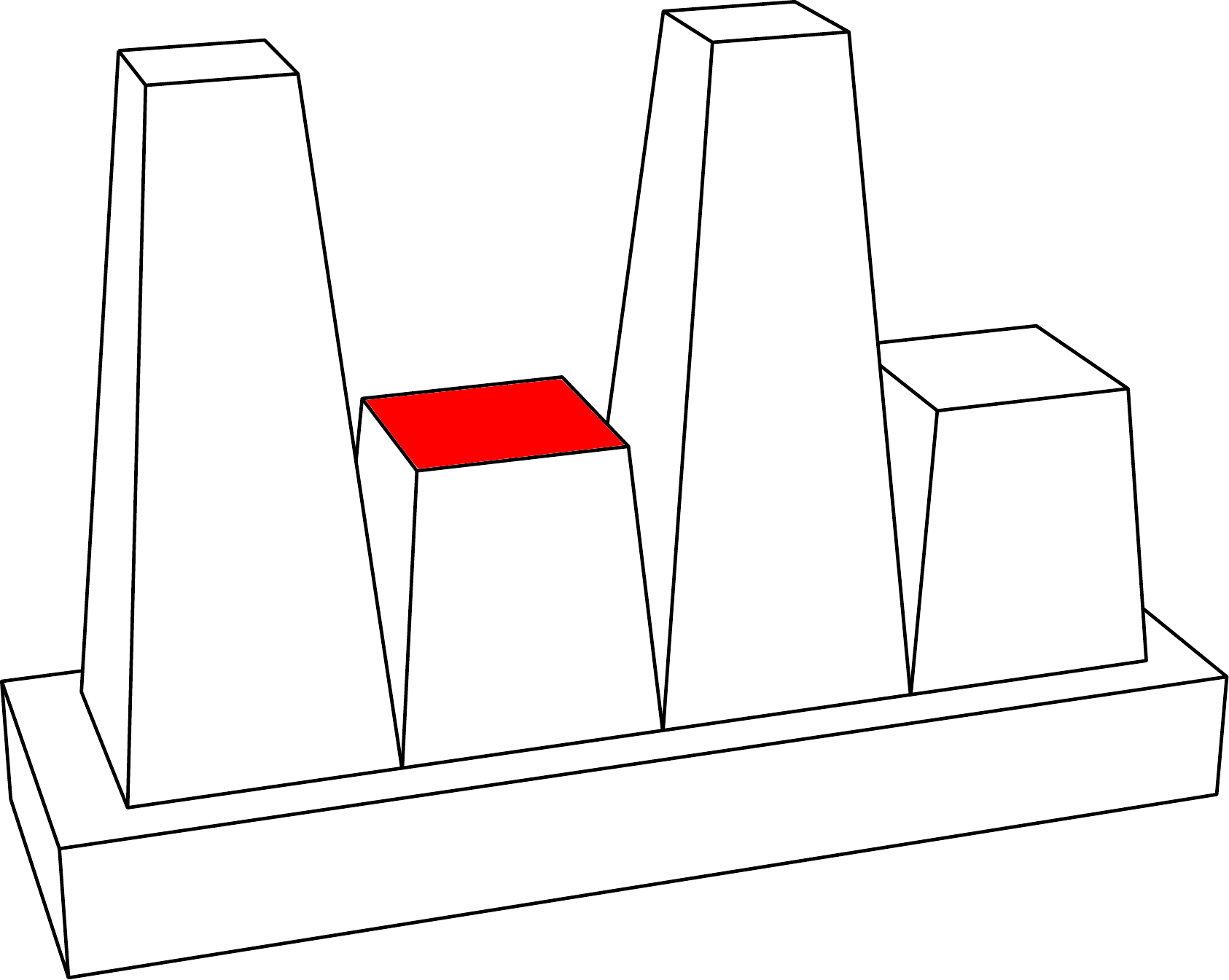}
    
    \caption{\textbf{(Left.)} A polytope containing a cavity that can be carved with ray sweeps but not half-planes. \textbf{(Right.)} A polytope that can be carved with rays, but with exactly one face that cannot be carved by half-plane cuts.}
    \label{fig:ray-only-simple}
\end{figure}

\begin{figure}
    \centering
    \label{fig:ray-only}
\end{figure}

\subsection{3D Cutting}

Surprisingly, we found very little work on 3D generalizations of these problems, the only ones being guillotine cuts (cuts that go all the way through) used to cut a convex polygon out of a sphere \cite{AhmedHI11}, and work on cutting styrofoam with hot wire \cite{JaromczykK03}.

We explore carving three-dimensional shapes with models of two different classes of cuts:
\begin{itemize}[noitemsep,topsep=0pt,parsep=0pt,partopsep=0pt]
\item Straight-cuts with tools like circular saws or table saws,
modelled using half-plane cuts (see 
    \cref{fig:genus-1-polytope}).
\item Tools with the ability to pierce up to a specific depth such as waterjets or laser cutters, modelled using ray sweeps (see
    \cref{fig:ray-only-simple}%
).
\end{itemize}
Note that straight-cuts can also be performed with a wide variety of other tools,
such as band saws, the long edge of chainsaws, or even jigsaws.
Some tools also allow for additional types of cuts that we do not model, such as using the tip of a chainsaw, or rounded cuts using a band saw.
However, straight-cuts can be performed with every sufficiently large saw,
and moreover they are the only useful type of cut for carving polytopes exactly, since polytopes do not have rounded edges.

For simplicity, we only consider 3D polytopes $P$ with $n$ vertices.
We make no assumptions of general position in this work,
although we only consider polytopes without self-intersections
or degenerate faces.
In other words,
we only consider polytopes that are uniquely defined by their connected interiors.

We will assume we are always given a polytope $P$ along with a triangulation of each face.
If not, we may triangulate each face in linear time \cite{chazelle1991triangulating} with no impact on the final running time. We henceforth refer to the resulting triangles as the facial triangles.
We will use $n$ denote the complexity of $P$, i.e. the number of facial triangles.
Note that this is asymptotically proportional to the number of vertices of $P$.
\begin{figure}
    \centering
    \includegraphics[scale=0.205]{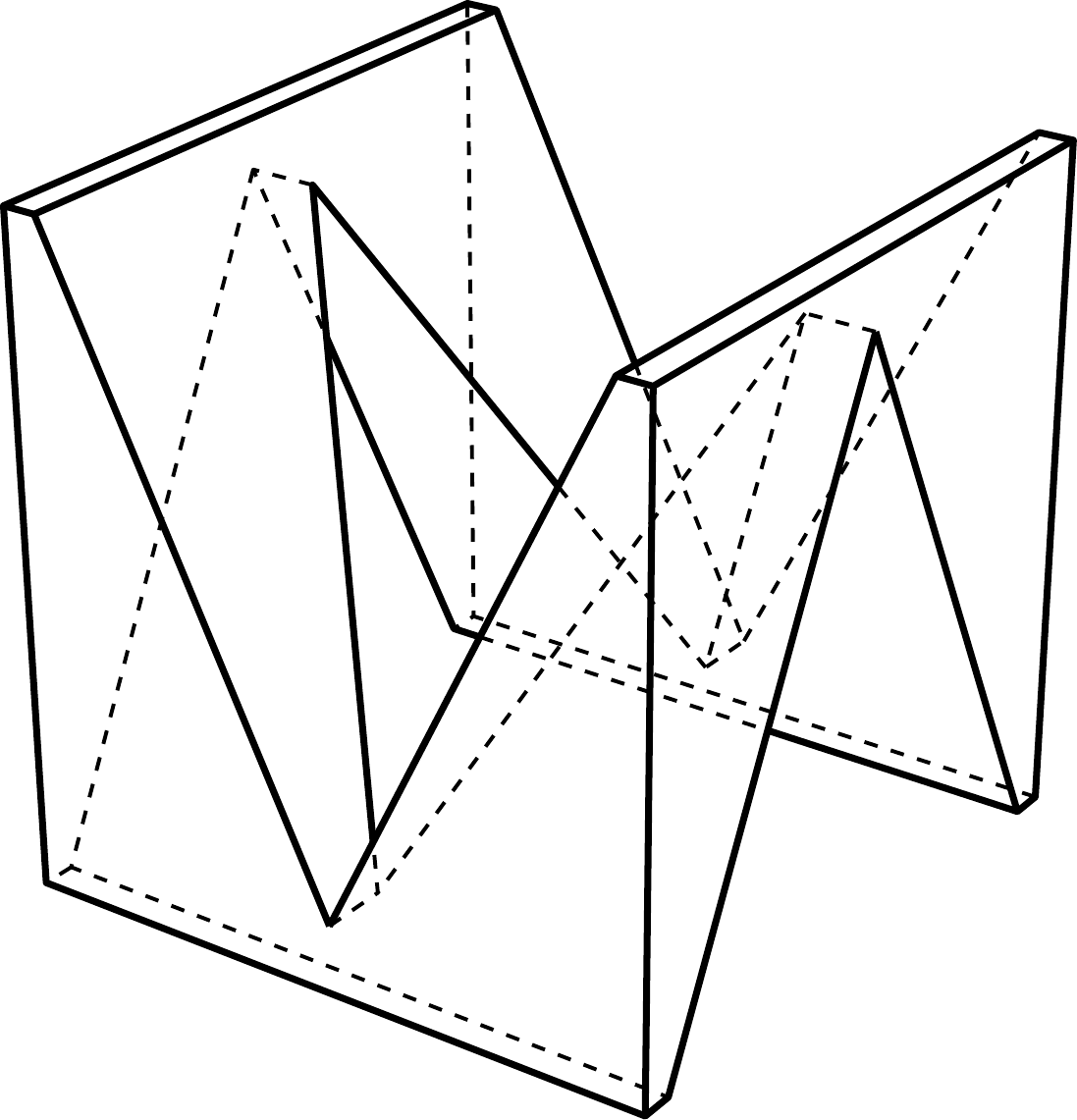}
    \qquad
    \includegraphics[scale=0.20]{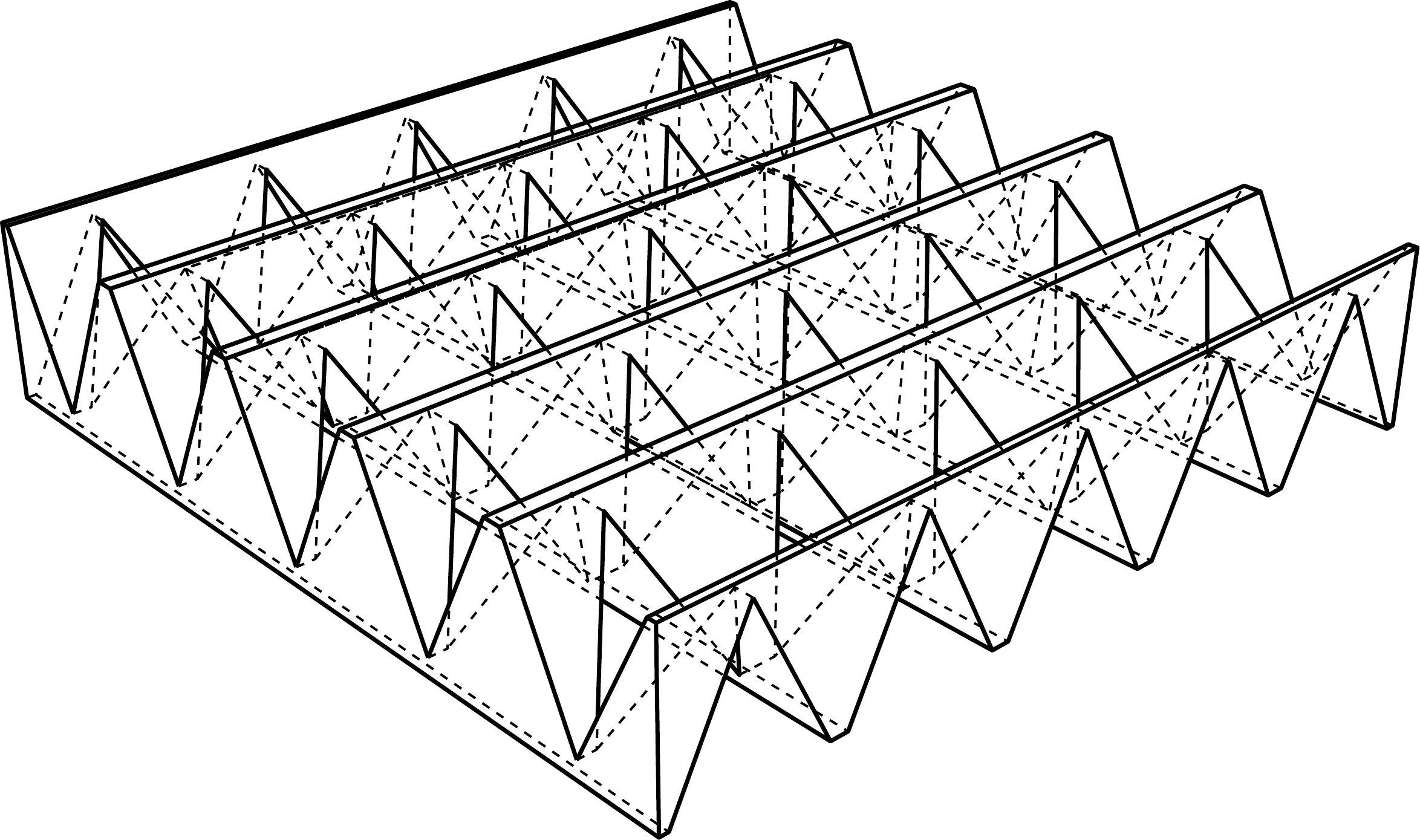}
    \caption{Non-convex polytopes with holes that can be carved using half-planes.}
    \label{fig:genus-1-and-n}
    \label{fig:genus-1-polytope}
    \label{fig:genus-n-polytope}
\end{figure}

\subsection{Our Contributions}
In our paper, we consider a polytope $P$
and determine if there exists a finite set of cuts $K$
such that for any
set $C\supset P$,
the ``carved'' set $C\setminus K$
has a connected component (i.e., maximal connected open subset) equal to $\Int{P}$ (the \emph{interior} of $P$).
Essentially, the excess connected components (``material'') can be ``removed''
to leave only the intended shape. We also wish this to be independent of $C$ (i.e.\ $C$ is not given as input to the algorithm) so that the produced set of cuts can be used to carve an object, regardless of the initial material.

In \cref{sec:half-plane} we consider 
the set of cuts to be half-planes, which can include complicated polytopes with holes as in \cref{fig:genus-1-and-n}.
We are able to characterize the shapes that can or cannot be carved,
and use this to derive a simple deterministic algorithm that runs in $O(n^2)$ time.
Furthermore, we also discuss a randomized algorithm that runs in $O(n^{3/2+\eps})$ time for any $\eps > 0$.

In \cref{sec:ray-cuts},
we model ray cuts as a finite set of \emph{ray sweeps} that consist of bounded continuous movement of a ray on a plane. 
Sweeps are meant to model cuts one could perform or program a machine to perform,
rather than simply being the set of shapes that are excluded by a infinite union of rays.
This model allows for a 
more general class of cuts than half-plane cuts,
as in \Cref{fig:ray-only-simple} and \cref{fig:ray-only}.
We are able to characterize the shapes that can be carved with ray sweeps,
and in this case
we present an algorithm that runs in $O(n^5)$ time.


\section{Half-Plane Cuts}
\label{sec:half-plane}
In this section, we study half-plane cuts which model cuts that can be performed with tools such as circular saws or table saws.
Formally, a half-plane is 
defined by a plane in $\R^3$ and the region on one side of a line contained in that plane. 
A polytope $P$
is \emph{carveable} if there is a 
set of half-plane cuts $\Halfplanes{}$,
so that for any $P$-containing set $C\supset P$,
the set $C\setminus \del{\cap_{H \in \Halfplanes{}} H }$
has a connected component (i.e., maximal connected open subset) equal to $\Int{P}$ (the \emph{interior} of $P$).

This model is equivalent to that of 
an
open question posed by Demaine et al.~\cite[$7$th open problem]{DemaineDK01}.
They ask if there exists an algorithm 
to cut three dimensional polyhedra
using an infinitely long rectangle that can only slice straight.

\subsection{Characterization of Carveable Polytopes} 

For a set $S$, we denote the convex hull of $S$ by $\CH{S}$.
We give a complete characterization of the polytopes $P$ which can be carved in this
model.

\begin{theorem}
\label{thm:half-plane-cutting-equivalence}
For a triangulated polytope $P$,
the following are equivalent:
\begin{enumerate}[noitemsep,topsep=0pt,parsep=0pt,partopsep=0pt]
    \item[(a)] $P$ is carveable.

    \item[(b)] For each facial triangle $T$ in $P$, there is a single half-plane $H_T$
    containing $T$ such that $H_T \cap \Int{P} = \varnothing$.

    \item[(c)] For each facial triangle $T$ in $P$,
        let $L_T$ denote the plane containing $T$.
        There is a single half-plane $H$ containing $T$
        and not containing  $\CH{\Int{P} \cap L_T}$
        whose boundary line passes through a vertex of 
        $\CH{\Int{P} \cap L_T}$,
        and also passes through a vertex $v$ of $T$.
\end{enumerate}
\end{theorem}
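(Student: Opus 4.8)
The plan is to reduce the three-dimensional carving condition to a planar separation question in each facial plane, and then establish the two equivalences $(a)\Leftrightarrow(b)$ and $(b)\Leftrightarrow(c)$. Throughout, for a facial triangle $T$ I write $L_T$ for its supporting plane and set $\Omega_T=\Int{P}\cap L_T$ and $K_T=\CH{\Omega_T}$; here $\Omega_T$ is a bounded relatively open planar set, $K_T$ is a compact convex polygon, and $T$ is a compact convex triangle with $T\cap\Int{P}=\varnothing$ (so $T\cap\Omega_T=\varnothing$). The point of this notation is that every half-plane containing the full-dimensional triangle $T$ is forced to lie inside $L_T$, so all the relevant geometry happens in $L_T$.

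For $(a)\Rightarrow(b)$ I would first use that carveability must hold for \emph{every} container, instantiating the definition at an enormous ball $C$. This forces two facts about the witnessing cut set $\Halfplanes{}$: no cut can meet $\Int{P}$ (otherwise the component containing interior points is split and cannot equal $\Int{P}$), and the closures of the cuts must cover all of $\partial P$ (otherwise $\Int{P}$ would connect to the exterior through an uncovered boundary point). Fixing $T$, the cuts whose closure meets $T$ two-dimensionally must lie in $L_T$, while all other cuts meet $L_T$ in segments and so cannot cover the open set $\operatorname{relint}(T)$; hence $T$ is covered by finitely many closed half-planes of $L_T$, each disjoint from $\Omega_T$. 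The key observation is that a closed half-plane disjoint from $\Omega_T$ is in fact disjoint from the whole convex hull $K_T$, because its open complementary half-plane is convex and contains $\Omega_T$, hence contains $\CH{\Omega_T}=K_T$. Thus $T\subseteq\bigcup_i H_i$ with each $H_i\cap K_T=\varnothing$ gives $T\cap K_T=\varnothing$; two disjoint compact convex sets are strictly separable by a line, and the separating half-plane containing $T$ is precisely the witness demanded by $(b)$.

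The reverse direction $(b)\Rightarrow(a)$ and the easy implication $(c)\Rightarrow(b)$ are both direct. For $(b)\Rightarrow(a)$, take $\Halfplanes{}=\{H_T\}$ over all facial triangles. Each $H_T$ contains $T$ and satisfies $H_T\cap\Int{P}=\varnothing$, so $\Int{P}\subseteq C\setminus\bigcup_T\overline{H_T}$ for every $C\supset P$; since $\partial P=\bigcup_T T\subseteq\bigcup_T\overline{H_T}$ is removed, no connected open subset of $C\setminus\bigcup_T\overline{H_T}$ can extend $\Int{P}$ across its boundary, so $\Int{P}$ is exactly a connected component (using that $\Int{P}$ is connected by the standing hypothesis on $P$). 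For $(c)\Rightarrow(b)$, the half-plane $H$ supplied by $(c)$ has its boundary passing through a vertex of $K_T$ with $K_T$ on the opposite side, so that boundary supports $K_T$; hence $H$ avoids $\Omega_T=\Int{P}\cap L_T$ while containing $T$, which is exactly $(b)$.

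The remaining implication $(b)\Rightarrow(c)$ is a planar canonicalization of the separating line. Starting from any line of $L_T$ with $T$ on its closed side and $K_T$ strictly on the other, I would slide it toward $K_T$ until it first touches $K_T$ and then pivot it about that contact until it also touches $T$, maintaining separation throughout; a tightened separating line of two convex polygons contacts each in a vertex or an edge, and in every case one extracts a line through a vertex of $K_T$ and a vertex $v$ of $T$ bounding the required $H$. I expect the main obstacle to be the $(a)\Rightarrow(b)$ direction, namely the rigorous passage from the single global statement that $\Int{P}$ is a connected component for all $C$ to the clean local facts that the cuts avoid the interior and cover the boundary; by comparison the planar separation and the construction are routine, the only nuisance being the degenerate configurations (edge contacts, coincident or collinear vertices) allowed by the paper's deliberate avoidance of any general-position assumption, which must be handled by choosing pivot directions consistently.
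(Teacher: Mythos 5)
Your proposal is correct and follows essentially the same route as the paper: the same four implications, a planar convex-separation argument for $(a)\Rightarrow(b)$, and the translate-then-pivot canonicalization for $(b)\Rightarrow(c)$. The only (harmless) variation is that in $(a)\Rightarrow(b)$ you separate $T$ from $\CH{\Int{P}\cap L_T}$ directly (each cut's open complement being convex and containing $\Int{P}\cap L_T$), whereas the paper separates $T$ from the open convex complement of the union of the coplanar cuts; you also spell out why only the cuts lying in $L_T$ can cover $T$, a step the paper asserts without comment.
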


Essentially, the transformation from $(a)$ to $(b)$
indicates that it suffices to consider one cut per facial triangle,
and the transformation from $(b)$ to $(c)$ indicates that
it suffices to consider only a limited set of potential cuts for each facial triangle.

\begin{proof}
\hfill
\begin{enumerate}[align=left, leftmargin=*,noitemsep,topsep=0pt,parsep=0pt,partopsep=0pt]
    \item[$(a)\Rightarrow(b)$]
    Assume that $P$ can be carved by a set of half-planes $W$.
    For a facial triangle $T$ lying on the plane $L_T$, 
    let $W_T \subset W$ be the subset of half-planes on $L_T$.
    Then, by definition,
    $T \subset \bigcup_{w\in W_T} w$.
    We claim that there exists a half-plane $H\supset T$
    such that $H\cap\Int{P} = \varnothing$.
    Then
    $\bigcup_{w\in W_T} w$
    is the complement of an open convex set on $L_T$.
    Denote this open convex set on $L_T$ as $S$.
    Note that $S$ contains $\Int{P}\cap L_T$,
    since each $w\in W_T$ does not intersect $\Int{P}$.
    As $S$ and $T$ are disjoint convex sets, 
    there exists a separating line $l$
    between $S$ and $T$ on $L_T$,
    which induces a half-plane $H$
    with boundary $l$ containing $T$
    satisfying $H\cap \Int{P} = \varnothing$,
    as desired.
    
    \item[$(b)\Rightarrow(a)$]
    Take all $f$ cuts of the form $H_T$.
    This divides the exterior of $P$ in any $P$-containing set $C$,
    and does not intersect $P$, leaving it as a connected component.
    
    \item[$(b)\Rightarrow(c)$]
    Let $T$ be a facial triangle of $P$,
    and
    let $L_T$ denote the plane containing $T$.
    Assume that $L_T\cap\Int{P}\neq\varnothing$.
    Let $H_T$ be a half-plane
    on $L_T$ containing $T$ and not intersecting $P$, 
    i.e. $H_T\cap T=T$ and $H_T\cap\Int{P}=\varnothing$.
    A half-plane $H_T'\supset H_T$ can be found
    such that $H_T'$ touches the boundary of $\CH{\Int{P}\cap L_T}$ by translating $H_T$.
    Another half-plane $H_T''$ can be found
    by rotating $H_T'$ around the boundary of $\CH{\Int{P}\cap L_T}$
    until the result touches a vertex of $T$.
    $H_T''$ is then tangent to $\CH{\Int{P}\cap L_T}$,
    and some vertex $v$ of $T$ is on its boundary.
    
    \item[$(c)\Rightarrow(b)$] Trivial.
    \qedhere
\end{enumerate}
\end{proof}

We now present observations about facial triangles $T$ and $\CH{\Int{P} \cap L_T}$, where $L_T$ is the plane on which $T$ lies. The first is stated in somewhat greater generality. 

\begin{obs}\label{obs:edges}
Let $L$ be a plane,
and let $E$ be the set of line segments defining edges of $P$ that cross $L$.
That is, each line segment $e\in E$ has two endpoints which are strictly separated by $L$ (i.e. the endpoints of $e$ lie on opposite sides of $L$).
Then, $\CH{\Int{P} \cap L} = \CH{E \cap L}$.
\end{obs}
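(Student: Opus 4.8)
The plan is to regard both sides as convex subsets of the plane $L$ and to prove equality by comparing support functions: for a direction $u$ in $L$, write $h_A(u) = \sup_{x\in A}\langle u,x\rangle$. Since a bounded convex set is determined by its support function, it suffices to show $h_{\Int{P}\cap L}(u) = h_{E\cap L}(u)$ for every $u$ (interpreting $\CH{\cdot}$ up to closure, which is harmless as $E\cap L$ is finite and its hull is already compact). The easy inequality is the geometric heart of the inclusion $E\cap L \subseteq \overline{\Int{P}\cap L}$. Fix a crossing edge $e$ and its intersection point $q = e\cap L$; because the endpoints of $e$ lie strictly on opposite sides of $L$, the point $q$ is in the relative interior of $e$ and $e$ meets $L$ transversally. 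Near an interior point of an edge, $\Int{P}$ looks like an open planar wedge times the edge direction, and slicing this by a plane transverse to the edge yields an open planar sector with apex exactly at $q$. Hence $q$ is a limit of points of $\Int{P}\cap L$, which gives $E\cap L \subseteq \overline{\Int{P}\cap L}$ and therefore $h_{E\cap L}(u)\le h_{\Int{P}\cap L}(u)$ for all $u$.

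For the reverse inequality I use a separating-line argument. Suppose some $p\in\Int{P}\cap L$ were not in $\CH{E\cap L}$; then there is a direction $u$ with $\langle u,p\rangle > \max_{q\in E\cap L}\langle u,q\rangle =: M'$. As $P$ is bounded, $\overline{\Int{P}\cap L}$ is compact, so $\langle u,\cdot\rangle$ attains its maximum there at some $p^{*}$ with $\langle u,p^{*}\rangle \ge \langle u,p\rangle > M'$. The goal is to show $p^{*}$ lies on a crossing edge, which forces $\langle u,p^{*}\rangle \le M'$ and yields a contradiction. First, $p^{*}\in\partial P\cap L$, since an interior point would have a planar neighbourhood on which $\langle u,\cdot\rangle$ could be increased. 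I then run a short case analysis on the cell of $\partial P$ containing $p^{*}$: if $p^{*}$ is in the relative interior of a facial triangle $F$ transverse to $L$, then $F\cap L$ is a segment $\sigma\subseteq\overline{\Int{P}\cap L}$ through $p^{*}$, and maximality pushes $p^{*}$ to an endpoint of $\sigma$, i.e.\ onto an edge of $P$; and if $p^{*}$ is in the relative interior of an edge $e$ crossing $L$ transversally, then $p^{*}=e\cap L\in E\cap L$, finishing the contradiction.

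The main obstacle is the degenerate case where $p^{*}$ is a vertex $v$ of $P$ lying on $L$, or lies on an edge or face \emph{contained} in $L$ rather than transverse to it; then no strictly crossing edge passes through $p^{*}$ and the reductions above stall. Indeed, slicing a cube through a vertex so that three vertices land on $L$ gives a triangular cross-section whose corners are vertices-on-$L$ with \emph{no} strictly crossing edge, showing that genuine care (or an appropriate reading of ``crossing'') is needed here. Under general position (no vertex of $P$ on $L$) this case never occurs and the argument is complete, so I would reduce to that setting by a limiting argument: perturb $L$ to a nearby plane $L_{\eps}$ in general position, so that the edges responsible for each degenerate corner now strictly cross $L_{\eps}$; the corresponding crossing points converge to the degenerate corner as $\eps\to 0$, so every extreme point of the cross-section is recovered as a limit of crossing points and lies in the (closed) hull on the right. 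Combining the two support-function inequalities then yields $h_{\Int{P}\cap L} = h_{E\cap L}$, and hence the claimed equality $\CH{\Int{P}\cap L} = \CH{E\cap L}$.
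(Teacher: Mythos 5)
Your generic-case argument is sound, and it is a genuine fleshing-out of what the paper leaves as a one-sentence assertion (that the extreme points of $\Int{P}\cap L$ are exactly the crossing points $E\cap L$): the local-wedge argument at a transversal edge crossing gives $E\cap L\subseteq\overline{\Int{P}\cap L}$, and pushing a directional maximizer of the cross-section off faces and onto edges gives the reverse hull inclusion, provided no vertex of $P$ lies on $L$.

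The gap is the final perturbation step, and it cannot be repaired, because the degenerate case you flag is not merely delicate but an actual counterexample to the statement as literally written. The set $E$ is defined by \emph{strict} crossing of $L$ itself, so a point that arises as a limit of crossing points of nearby planes $L_{\eps}$ need not lie in $\CH{E\cap L}$: an edge that strictly crosses $L_{\eps}$ for all small $\eps\neq 0$ may have an endpoint exactly on $L$ and is then excluded from $E$, so the map $\eps\mapsto E_\eps$ is discontinuous at $\eps=0$ and ``lies in the closed hull on the right'' does not follow. Concretely, take the unit cube and $L:\ x+y+z=1$ (your own example), or a regular octahedron and its equatorial plane: the cross-section is a nondegenerate triangle (resp.\ square), yet every edge meeting $L$ either lies in $L$ or has an endpoint on $L$ — for the cube, the coordinate sum changes by exactly $1$ along each edge, so no edge has endpoints with sums $0$ and $2$ — hence $E=\varnothing$ and $\CH{E\cap L}=\varnothing\neq\CH{\Int{P}\cap L}$. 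No limiting argument closes this; what is needed is either a general-position hypothesis (no vertex of $P$ on $L$) or a non-strict reading of ``cross'' that includes edges with one endpoint on $L$ (but excludes edges contained in $L$), taking $e\cap L$ to be that endpoint. Note that the paper applies the observation with $L=L_T$ the plane of a facial triangle, which always contains vertices of $P$, so the non-strict reading is the one actually required downstream; your proof should be restated and completed under that definition rather than patched by perturbation.
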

This is because the boundary of $\Int{P} \cap L$ consists of vertices that defined by the intersections of 
$E$ with $L$.

We make an additional observation about polytopes that are not carveable.
\begin{obs} \label{obs:triple_edges}
Let $v$ be a vertex of $P$ on facial triangle $T$ contained in the plane $L_T$.
If there exists three edges $e_1, e_2, e_3$ of $P$ such that $v\in \CH{\{e_1, e_2, e_3\} \cap L_T}$,
then the polytope $P$ is not carveable.
\end{obs}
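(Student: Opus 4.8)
The plan is to prove the contrapositive: assuming $P$ is carveable, I will derive a contradiction from the existence of the three edges. By \cref{thm:half-plane-cutting-equivalence}, carveability supplies, for the facial triangle $T$ containing $v$, a half-plane $H_T$ lying in $L_T$ with $T\subseteq H_T$ and $H_T\cap\Int{P}=\varnothing$. Since $H_T\subseteq L_T$, this last condition is equivalent to $H_T\cap\del{\Int{P}\cap L_T}=\varnothing$, so the entire argument can be carried out inside the single plane $L_T$.

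First I would upgrade disjointness from $\Int{P}\cap L_T$ to disjointness from its convex hull. Because $H_T$ is a half-plane, its complement $L_T\setminus H_T$ is an \emph{open} half-plane, and the inclusion $\Int{P}\cap L_T\subseteq L_T\setminus H_T$ forces $\CH{\Int{P}\cap L_T}$ into the closed complementary half-plane $H^-$ (whose boundary is the line $\ell=\partial H_T$ and whose interior satisfies $\Int{H^-}=L_T\setminus H_T$, disjoint from $H_T$). In other words, $\ell$ weakly separates $T$ from $\CH{\Int{P}\cap L_T}$: the triangle sits in $H_T$ while the hull sits in $H^-$.

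Next I would trap $v$ in the interior of that hull. Writing $p_i=e_i\cap L_T$, \cref{obs:edges} gives $p_1,p_2,p_3\in\CH{\Int{P}\cap L_T}$, hence $\CH{\{p_1,p_2,p_3\}}\subseteq\CH{\Int{P}\cap L_T}$. The role of the three edges is precisely to certify that $v$ lies in the interior (within $L_T$) of the triangle $\CH{\{p_1,p_2,p_3\}}$; as this triangle is contained in the larger hull and its interior is open, we obtain $v\in\Int{\CH{\Int{P}\cap L_T}}\subseteq\Int{H^-}=L_T\setminus H_T$. But $v$ is a vertex of $T\subseteq H_T$, so $v\in H_T$ as well, and $H_T\cap\del{L_T\setminus H_T}=\varnothing$ is the sought contradiction. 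Hence no such $H_T$ exists and $P$ is not carveable.

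The main obstacle is the first step: arguing that a half-plane avoiding the (generally non-convex) set $\Int{P}\cap L_T$ must also avoid its convex hull except possibly along $\ell$. This is exactly where it matters that $H_T$ is a \emph{half}-plane rather than an arbitrary convex region, since its complement is again a half-plane. A secondary subtlety is that the three-edge hypothesis must place $v$ strictly inside $\CH{\{p_1,p_2,p_3\}}$ (the genuine content of ``three edges'', essentially Carath\'eodory in the plane): a degenerate configuration with $v$ on the boundary of this triangle would only force the relevant $p_i$ onto $\ell$ and need not preclude separation, so it is the interior membership that drives the contradiction.
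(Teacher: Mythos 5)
Your proof is correct and is essentially the paper's argument with the details filled in: the paper disposes of this observation in one sentence by citing \cref{thm:half-plane-cutting-equivalence}(c) together with the containment $\CH{\{e_1,e_2,e_3\}\cap L_T}\subseteq\CH{\Int{P}\cap L_T}$, and your contrapositive --- the carving half-plane for $T$ must avoid the convex hull of $\Int{P}\cap L_T$ because the complement of a half-plane is convex, yet it must contain $v$, which sits inside that hull --- is exactly the expansion of that assertion. Your closing caveat about the degenerate case ($v$ only on the boundary of $\CH{\{p_1,p_2,p_3\}}$, where the argument yields no contradiction) flags a genuine imprecision that the paper's one-line proof glosses over, but it does not constitute a different approach.
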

This directly follows from \cref{thm:half-plane-cutting-equivalence}(c), as $\CH{\{e_1, e_2, e_3\} \cap L_T} \subseteq \CH{\Int{P} \cap L_T}$.

\subsection{Quadratic Time Decision Algorithm}
\label{subsec:quadratic}

Using the characterization from \cref{thm:half-plane-cutting-equivalence}(c) and the observations, we show a simple quadratic time algorithm exists for determining if a polytope is carveable.

\begin{figure}
    \centering
    \includegraphics[scale=0.8]{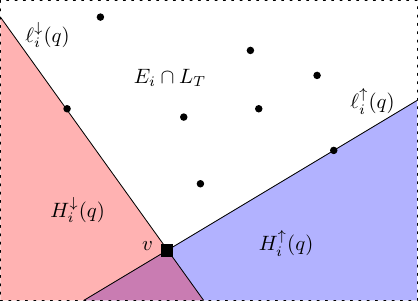}
    \caption{The tangents of $v$ and the halfplanes $H^\uparrow_i(v)$ and $H^\downarrow_i(v)$. The figure is drawn on the plane $L_T$. }
    \label{fig:tangents}
\end{figure}

\begin{theorem}\label{thm:quad-half-plane}
Given a triangulated polytope $P$ with $f$ facial triangles, 
there is an algorithm
to determine if $P$ can be carved by
half-planes
that runs in $O(n^2)$ time.
If the answer is in the affirmative, the algorithm
also outputs a set of $n$ half-planes
that carve $P$.
\end{theorem}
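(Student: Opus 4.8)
The plan is to turn the characterization \cref{thm:half-plane-cutting-equivalence}(c) into a per-triangle test. Since $f=\Theta(n)$, it suffices to process each facial triangle $T$ in $O(n)$ time. For a fixed $T$ lying on the plane $L_T$, I would first use \cref{obs:edges} to avoid the interior cross-section entirely: scanning all $O(n)$ edges of $P$, keeping only those whose endpoints are strictly separated by $L_T$, and intersecting each with $L_T$ yields in $O(n)$ time a set $Q$ of $O(n)$ points with $\CH{Q}=\CH{\Int{P}\cap L_T}$. The condition of \cref{thm:half-plane-cutting-equivalence}(c) then becomes a purely two-dimensional question on $L_T$: does there exist a closed half-plane $H$ with $T\subseteq H$, whose boundary passes through a vertex of $T$ and supports $\CH{Q}$ (touching it at a vertex, with $\CH{Q}$ on the closed complementary side)? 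The whole algorithm is: run this test for every $T$; report that $P$ is carveable (storing the witnessing half-planes $\{H_T\}$) iff every test succeeds, and report failure the first time some $T$'s test fails.

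By \cref{thm:half-plane-cutting-equivalence}(c) the boundary line may be taken through a vertex $v$ of $T$, so I would run the test once for each of the three vertices of $T$ and accept if any succeeds. Fixing $v$, the lines through $v$ keeping all of $Q$ weakly on one side are exactly those that do not cross the wedge that $\CH{Q}$ subtends at $v$; the two bounding tangent lines from $v$ to $\CH{Q}$ are the extreme such lines, and they are supported at the two points of $Q$ that are angularly extreme as seen from $v$. These extremes (and the detection of the infeasible case $v\in\Int{\CH{Q}}$, signalled by the directions $\{p-v\}_{p\in Q}$ failing to fit in a closed half-plane) are found by a single linear scan over $Q$, with \emph{no} convex-hull construction. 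This produces the two candidate half-planes $H^{\uparrow}(v)$ and $H^{\downarrow}(v)$ of \cref{fig:tangents}, whose boundaries already pass through vertices of $\CH{Q}$; it then remains only to test in $O(1)$ time whether $H^{\uparrow}(v)$ or $H^{\downarrow}(v)$ contains the other two vertices of $T$. That checking the three vertices is complete follows from the argument in \cref{thm:half-plane-cutting-equivalence}: if $T$ and $\CH{Q}$ are separable, slide the separating line onto a supporting line of $\CH{Q}$ and rotate it (staying a support) until it first meets a vertex $v^\ast$ of $T$, at which point it is exactly one of $H^{\uparrow}(v^\ast),H^{\downarrow}(v^\ast)$. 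Each vertex costs $O(n)$ and each triangle $O(1)$ vertices, so the total is $O(n)$ per triangle and $O(n^2)$ overall, and the stored half-planes give the required output of $\Theta(n)$ cuts.

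The step I expect to be the main obstacle is staying within $O(n)$ per triangle rather than $O(n\log n)$: explicitly building $\CH{Q}$ would cost $O(n\log n)$ and inflate the bound to $O(n^2\log n)$, so it is essential that the tangent directions (equivalently, the feasibility of the two-variable system in the boundary normal) be obtained by a linear scan over $Q$ instead of by sorting or hull construction. The remaining care is with degeneracies, as we assume no general position: I would treat the constraints placing $Q$ on the far side as non-strict, since boundary contact at a vertex of $\CH{Q}$ is precisely what \cref{thm:half-plane-cutting-equivalence}(c) permits, while still guaranteeing that the open set $\Int{P}$ is excluded; and I would dispatch the easy case $\Int{P}\cap L_T=\varnothing$ (so $Q=\varnothing$) by returning any half-plane of $L_T$ containing $T$. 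Collinear points in $Q$, ties among extreme directions, and $v$ landing on $\partial\CH{Q}$ are all resolved by the same angular comparisons and do not affect the asymptotics.
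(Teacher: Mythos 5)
Your proposal is correct and follows essentially the same route as the paper's proof: reduce each facial triangle to a planar problem via \cref{obs:edges}, then for each vertex $v$ of $T$ find the angularly extreme points of $E\cap L_T$ as seen from $v$ by a single gift-wrapping-style linear scan (avoiding explicit hull construction and its $\log$ factor), and test the resulting candidate tangent half-planes against \cref{thm:half-plane-cutting-equivalence}(c). The only cosmetic difference is that you verify the candidates in $O(1)$ against the other two vertices of $T$ where the paper allows itself a linear-time separation check, which does not affect the $O(n)$ per-triangle and $O(n^2)$ overall bounds.
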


\begin{proof}
Let $T$ be a facial triangle,
and let $L_T$ be the plane containing $T$.
By \Cref{obs:edges},
we can compute the set of edges $E$ of $P$ that cross $L$
to compute $\CH{\Int{P} \cap L_T}$ in $O(n\log n)$ total time.
If the region is empty,
then any half-plane containing $T$ can be output.
Otherwise,
by characterization $(c)$
from \cref{thm:half-plane-cutting-equivalence},
we must determine if there exists a half-plane $H$
containing $T$
and tangent to
$\CH{\Int{P}\cap L_T}$,
such that the boundary line of $H$
passes through a vertex $v$ of $T$.
Note that each vertex $v$ of $T$ induces up to two lines 
going through $v$ and tangent to $\CH{\Int{P}\cap L_T}$.
It suffices to consider each of them.
It is possible to compute tangents in $O(\log n)$ time 
per vertex once we have explicitly computed the convex hull.

However, we can shave the log factor (pun intended), and
do this in $O(n)$ time per facial triangle with a different algorithm.
The tangent to $\CH{\Int{P}\cap L_T}$ on $L_T$ through a vertex $v$ 
is a line through a vertex $v$ and a vertex of $X = E\cap L_T$.
Let $l_{x,v}$ be the line
passing through a
point $x$ in $X$ and a vertex $v$ of $T$.
There are at most $O(n)$ such lines since $|E| = O(n)$.
If $l_{x,v}$ is a separating line w.r.t.~the plane $L_T$
between $X\setminus\{x\}$ and the other two vertices of $T$,
then the half-plane $H$
containing $T$ with boundary $l_{x,v}$
satisfies all of our conditions.
Otherwise, if no such line exists,
then no half-plane satisfying our conditions exists,
and $P$ cannot be carved using half-planes
by
\cref{thm:half-plane-cutting-equivalence}.

It is not immediately clear how we can 
efficiently check each of these lines.
However, observe that
only the ``extreme'' lines $l_{x,v}$ for any specific $v$
can be candidate separating lines.
This can be accomplished by performing what is essentially an iteration of the gift wrapping algorithm
for convex hulls~\cite[Section 1.1]{BergCKO08}.
Maintain a ``current'' line $l_{x',v}$.
Iterate through all the points $x\in X$.
If $x$ lies clockwise (resp.~counter-clockwise) relative to the ray from $v$ to $x'$, set $x'\leftarrow x$.

If $v$ lies outside of $\CH{X}$, this procedure is guaranteed to find the tangents to $\CH{X}$ passing through $v$ as candidate lines, and we can check in linear time whether the candidate lines separate $T$ and $X$.
If $v$ lies within $\CH{X}$, then no such line exists, so we conclude that the polytope $P$ is not carveable.

This algorithm shows how to find a 
half-plane for each facial triangle in $O(n)$ time or deduce one doesn't exist.
The whole algorithm runs in $O(n^2)$ time,
since there are $O(n)$ facial triangles.
\end{proof}

\begin{figure}
    \centering
    \includegraphics[scale=0.18]{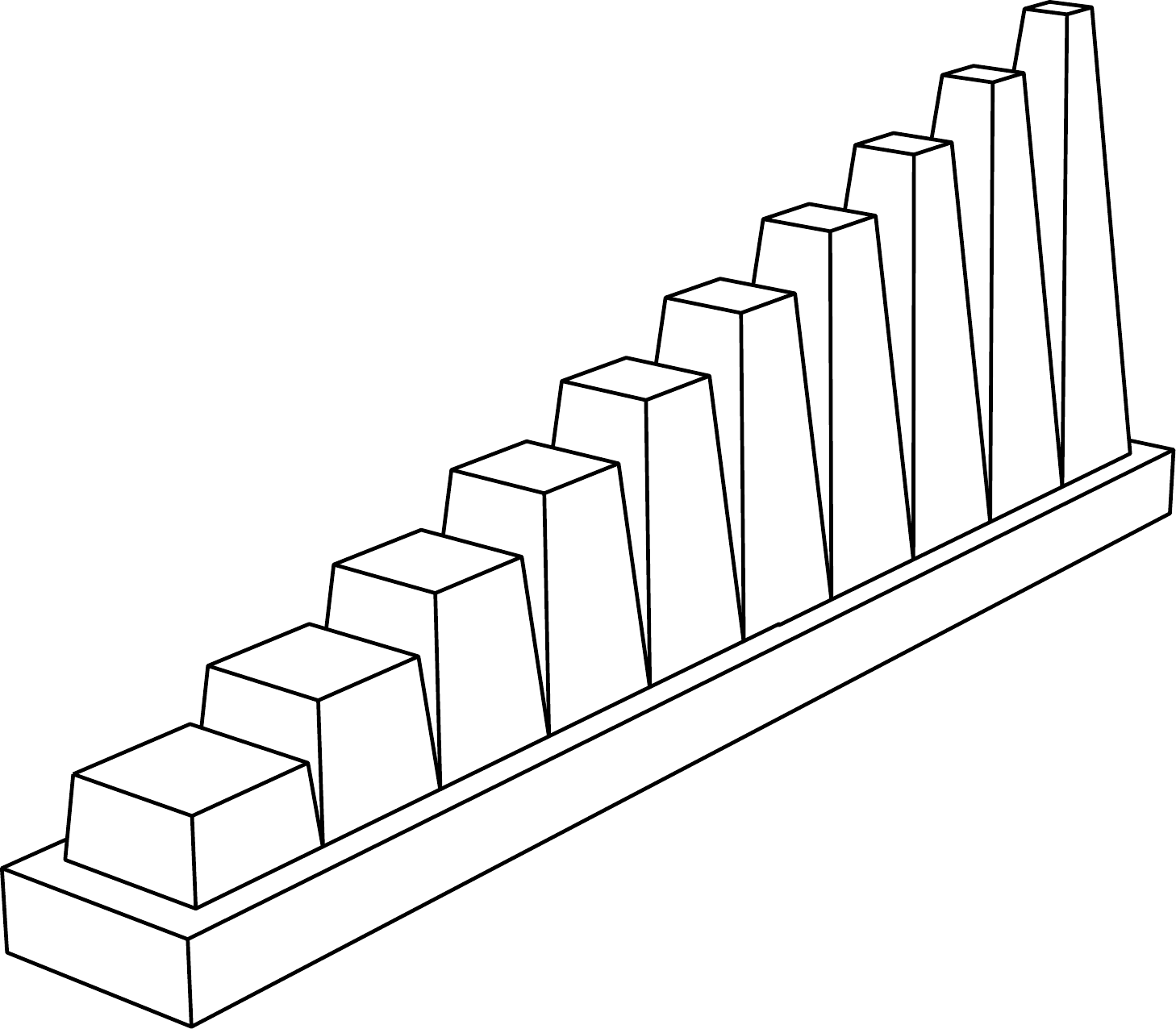}
    \caption{A construction for family of polytopes carveable using half-planes
    (to increase the size, add more pillars).
    For a member of this family with $n$ facial triangles, the algorithm described in
    \cref{thm:quad-half-plane} takes $\Theta(n^2)$ time.
    }
    \label{fig:slow-example}
\end{figure}

\Cref{fig:slow-example} demonstrates that there exists a family of polytopes
that can be carved using half-planes,
for which the runtime of the algorithm described in
\cref{thm:quad-half-plane} is $\Theta(n^2)$.
In the next section we show a faster algorithm for deciding if a polytope is carveable.

\subsection{Subquadratic Time Decision Algorithm}
\label{subsec:randomized-subquadratic}

At a high level, our quadratic time algorithm iterates through a set $\mathcal{Q}$ of queries consisting of pairs $q = (v, T)$,
where $v$ is a vertex on a facial triangle $T$ in the polytope.
Each query $q=(v, T)$ asks for tangents of $\CH{E \cap L_T}$ that go through $v$,
lying on the plane $L_T$ that contains $T$.
Since we have $|\mathcal{Q}| = O(n)$ queries and the polygon we are querying can be different for different planes $L_T$ and can also itself have size $O(n)$, this seems to require quadratic time.
However, we note that the polygons are related: They all come from the original polytope $P$. 
Surprisingly, we show that we can solve this problem faster by answering all of these queries simultaneously.

We present a reduction from the decision problem of whether a polytope $P$ is carveable using half-planes to the problem of detecting intersections between half-planes and line segments in $\R^3$.   
The reductions solves each query of $\mathcal{Q}$ on an increasingly large random subset of the edges $E$ that define the polytope $P$, by computing the violations between half-planes induced by the previous tangents, and using the randomized analysis of Clarkson--Shor~\cite{ClarksonS89} to bound the number of violations.
The violations are found by solving a problem involving intersections between half-planes and line segments.

\begin{lemma}
Let $T(n, k)$ be the time complexity of an algorithm for detecting intersection between $n$ half-planes and $n$ line segments in $\R^3$ with at most $k$ intersections.
Then, for any parameter $2\le r < n$, 
there is a randomized algorithm that decides if a polytope $P$ is carveable by half-planes that runs in expected time $O((\log n/\log r) \cdot (T(O(n), O(nr\log n)) + nr\log n))$ with high probability.
\end{lemma}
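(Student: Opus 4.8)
The plan is to phrase the whole batch of queries $\mathcal{Q}$ as one geometric problem and then solve it by a Clarkson--Shor sample gradation in which the per-level work is exactly an instance of the half-plane/segment intersection-detection problem. Recall from \cref{thm:half-plane-cutting-equivalence}(c) and \cref{obs:edges} that each query $q=(v,T)\in\mathcal{Q}$ is resolved by the tangent lines to $\CH{E\cap L_T}$ through $v$ drawn on the plane $L_T$, and that $P$ is carveable iff every query admits such a tangent whose induced half-plane $H$ (the one containing $T$) avoids $\CH{E\cap L_T}$. The first step is to pin down the reduction: given a candidate half-plane $H\subset\R^3$ lying on $L_T$ with boundary through $v$, I call an edge $e\in E$ a \emph{violation} of $H$ when the crossing point $e\cap L_T$ lies inside $H$, and I observe that this happens exactly when the segment $e$ and the half-plane $H$ intersect in $\R^3$. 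Thus the violations of a family of $O(n)$ candidate half-planes (at most two per query) against any set of $O(n)$ segments form an instance solvable by the assumed routine in time $T(O(n),k)$, where $k$ is the number of violations.

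Next I would set up the sample hierarchy. Take a uniformly random permutation of $E$ and let $R_0\subset R_1\subset\dots\subset R_{\lceil\log_r n\rceil}=E$ be the nested prefixes with $|R_j|=\Theta(r^{j+1})$. At the coarsest level I can afford to be explicit: since the planes $L_T$ number $O(n)$, the total size $\sum_T |R_0\cap L_T|$ is $O(nr)$, so I can materialize $R_0\cap L_T$ for every query and directly compute the initial candidate tangents $\tau_0$ and their half-planes in $O(nr)$ time overall. The core of the algorithm is then a refinement loop over the $O(\log n/\log r)$ levels: at level $j$ I hold, for each query, the tangent $\tau_j$ of $\CH{R_j\cap L_T}$ through $v$, and I pass the $O(n)$ half-planes $\{\tau_j\}$ together with the $O(n)$ segments of $R_{j+1}$ to the intersection-detection routine to recover every violation, i.e.\ every newly sampled edge that falls in a candidate half-plane. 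For each query I then recompute $\tau_{j+1}$ from its defining edges together with the reported violations, which is correct because the tangent against a finer sample can only change within the conflict region of the coarser tangent. After the top level the sample equals $E$, so the tangents are exact and a final per-query check against \cref{thm:half-plane-cutting-equivalence}(c) decides carveability.

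The running time rests on bounding the number of violations reported at each level. Here I would invoke the Clarkson--Shor analysis: a tangent computed from a random sample that meets $L_T$ in $s_T$ points has, in expectation, $O(m_T/s_T)$ full-set edges in its conflict region, where $m_T=|E\cap L_T|$; since $R_{j+1}$ is a $\Theta(r)$-factor refinement of $R_j$, the expected number of edges of $R_{j+1}$ that fall into $\tau_j$'s conflict region is only $O(r)$ per query, and the exponential tail bound of Clarkson--Shor, union-bounded over the $O(n)$ queries and $O(\log_r n)$ levels, raises this to $O(r\log n)$ per query with high probability. Summing over queries gives $k=O(nr\log n)$ violations per level, so each level costs $T(O(n),O(nr\log n))$ for the detection routine plus $O(nr\log n)$ to rebuild the tangents, and multiplying by the $O(\log n/\log r)$ levels yields the stated expected bound with high probability.

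The hard part will be the probabilistic accounting in the last step. The Clarkson--Shor machinery has to be applied to the $O(n)$ distinct per-plane ground sets $E\cap L_T$ simultaneously through a single global sample, so I must check that the tangent configurations have constant defining size and that one exponential-decay tail bound, union-bounded over all queries and levels, delivers the $O(r\log n)$ per-query conflict bound with high probability rather than merely in expectation. A secondary point to verify carefully is the correctness and cost of the incremental refinement: that replacing $\tau_j$ by the tangent of its defining edges plus the level-$j$ violations really reproduces $\tau_{j+1}$, and that the defining edges of a surviving tangent are never dropped between levels, so that no query is resolved with a spurious half-plane.
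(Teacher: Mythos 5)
Your proposal follows essentially the same approach as the paper: reduce tangent maintenance to half-plane/segment intersection detection in $\R^3$, run a Clarkson--Shor sample gradation with $O(\log n/\log r)$ levels, bound the per-query conflict size by $O(r\log n)$ with high probability, and recompute each tangent locally from its violations. The only substantive point the paper makes explicit that you gloss over is the case where a vertex acquires violations on both its upper and lower tangent half-planes at some level --- the paper turns this into an immediate certificate of non-carveability via \cref{obs:triple_edges}, whereas you fold it into the (correct but unelaborated) claim that the tangent recomputation either succeeds or reveals that $v$ lies inside the hull.
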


\begin{proof}
We use a bottom-up sampling approach. 
Let $E_0$ be the edges (i.e. line segments not including end points) 
defining the polytope $P$.
We choose $E_0\supset E_1\supset E_2\supset\cdots\supset E_k = \varnothing$ to be a series of uniformly random samples of the edges:
To get $E_{i+1}$ from $E_i$,
we take each edge $e\in E_i$ with probability $\frac1r$ for some parameter $r$.
We stop at the first value $k$ such that $E_k = \varnothing$. Note that with high probability (\whp{}) $k = \Theta(\log n/\log r)$, as $\abs{E_0} = O(n)$.

Throughout the algorithm, for each query $q = (v, T)$ pair (with $T$ contained in a plane $L_T$),
we maintain two half-planes $H^\uparrow_i(q), H^\downarrow_i(q) \subset L_T \setminus \CH{E_i \cap L_T}$. Their boundaries are, respectively, the upper and lower tangents of $\CH{E_i \cap L_T}$ passing through $v$.
Note that it is possible that $\CH{E_i \cap L_T}$ is empty (i.e.\ when $i=k$), so in this case, we let $H^\uparrow_i(q) = H^\downarrow_i(q) = L_T$. See \Cref{fig:tangents} for an illustration.
Let $\mathcal{H}_i = \{H^\uparrow_i(q) \mid q\in \mathcal{Q}\} \cup \{ H^\downarrow_i(q)\mid q\in \mathcal{Q}\}$.

To compute $\Halfplanes{}_{i}$ from $\Halfplanes{}_{i+1}$, we create an instance of intersection detection between the line segments $E_{i}\setminus E_{i+1}$ and the half-planes $\Halfplanes{}_{i+1}$. 

We analyze what intersections can occur for the half-planes defined by a query $q = (v,T)$.
We observe that on the plane $L_T$, since $E_{i+1}$ is a $1/r$ sample of $E_{i}$, $L_T\cap E_{i+1}$ is also a $1/r$ sample of $L_T\cap E_{i}$. 
By a standard analysis of Clarkson and Shor \cite{ClarksonS89}, this implies that the number of points of $L_T\cup E_{i}$ that lie within $H_{i+1}^\uparrow(q)$ and $H_{i+1}^\downarrow(q)$ is at most $O(r\log n)$ \whp{}
Note that this is exactly the intersections between $E_{i}\setminus E_{i+1}$ and $H_{i+1}^\uparrow(q)$ and $H_{i+1}^\downarrow(q)$, and thus the total number of intersections between $E_i\setminus E_{i+1}$ and $\Halfplanes{}_{i+1}$ is $O(nr\log n)$.

For a query $q=(v,T)$, three types of events may occur:
\begin{enumerate}[noitemsep,topsep=0pt,parsep=0pt,partopsep=0pt]
    \item \textbf{$E_{i+1}\cap L_T$ was empty and $E_{i}\cap L_T$ is non-empty.}
    In this case, we can inspect the points of $E_{i}\cap L_T$ (of which there at most $O(r\log n)$ \whp{}) and either compute $H^\uparrow_i(q)$ and $H^\downarrow_i(q)$ or deduce that $P$ is not carveable in linear time in the size of $E_{i}\cap L_T$ as in \cref{thm:quad-half-plane}.
    \item 
    \textbf{$E_{i}\cap L_T$ contains a point that lies in both $H^\uparrow_{i+1}$ and $H^\downarrow_{i+1}$.}
    Let $e$ denote the edge that induces this point.
    Let $e^\uparrow$ and $e^\downarrow$ be the edges of $E_{i+1}$ that defined $H^\uparrow_{i+1}$ and $H^\downarrow_{i+1}$ respectively.
    Observe that $e, e^\uparrow, e^\downarrow$ are the triple of edges that certify that $P$ is not carvable by \Cref{obs:triple_edges}.
    \item 
    \textbf{The points of $E_{i}\cap L_T$ either lie in $H^\uparrow_{i+1}(q)$ or $H^\downarrow_{i+1}(q)$.} 
    In this case, we can compute $H^\uparrow_{i}(q)$ and $H^\downarrow_{i}(q)$ from 
    $E_{i}\cap L_T$ in linear time following a similar procedure to find the most extreme halfplane as in \cref{thm:quad-half-plane}.
\end{enumerate}

In all cases, either we conclude that the polytope $P$ is not carveable or we compute $\mathcal{H}_{i}$. 
If we have computed $\mathcal{H}_0$, we have extreme half-planes for every vertex of every facial triangle $T$.
At this point, we can conclude that no vertex of any facial triangle $T$ lies within $\CH{\Int{P}\cap L_T}$,
but it is still possible that some portion of the interior of a facial triangle $T$ intersects $\CH{\Int{P}\cap L_T}$.
However, as we computed the extreme half-planes for every vertex, we can check in $O(1)$ time whether one of these half-planes is a separating half-plane satisfying \cref{thm:half-plane-cutting-equivalence}(c).

To analyze the runtime, we use an algorithm for intersection detection between half-planes and line segments $O(\log n/\log r)$ times with $O(n)$ half-planes and $O(nr\log n)$ intersections. 
Step 1 and 3 of the above run in time linear in the number of intersections, i.e. $O(nr\log n)$ total time.
\end{proof}

Since line segment intersection can be reduced to ray shooting among halfplanes,
the algorithm of Agarwal and Matou\v{s}ek\cite{AgarwalM93} for ray shooting imply that $T(n, k) = O(n^{3/2+\eps} + n^{1/2+\eps} \cdot k)$ 
for any $\eps > 0$.
Thus we conclude the following corollary by choosing $r= O(1)$.

\begin{cor} 
For any $\eps>0$, there exists a Las Vegas algorithm to determine if a polytope $P$ is carveable by half-planes that runs in time $O(n^{3/2+\varepsilon})$ with high probability. Furthermore, if $P$ is carveable this algorithm outputs a set of cuts to carve $P$.
\end{cor}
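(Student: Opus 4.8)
The plan is to derive the corollary as a direct instantiation of the preceding lemma, taking the intersection-detection subroutine to be the ray-shooting data structure of Agarwal and Matou\v{s}ek and setting the sampling parameter $r$ to a constant. First I would recall that deciding whether $n$ half-planes and $n$ line segments in $\R^3$ intersect, under a promise of at most $k$ intersections, reduces to $O(n)$ ray-shooting queries against an arrangement of half-planes, so that the Agarwal--Matou\v{s}ek structure gives $T(n,k) = O(n^{3/2+\eps} + n^{1/2+\eps}\cdot k)$ for any $\eps > 0$. Substituting $r = O(1)$ into the lemma, the per-level number of intersections is $k = O(nr\log n) = O(n\log n)$, the number of levels is $\log n/\log r = O(\log n)$, and each level costs $T(O(n), O(n\log n)) + O(nr\log n)$.

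Next I would carry out the simplification. Plugging $k = O(n\log n)$ into the ray-shooting bound yields $T(O(n), O(n\log n)) = O(n^{3/2+\eps} + n^{1/2+\eps}\cdot n\log n) = O(n^{3/2+\eps}\log n)$, and the additive $O(n\log n)$ term is dominated; multiplying by the $O(\log n)$ levels gives $O(n^{3/2+\eps}\log^2 n)$. The key bookkeeping step is that every polylogarithmic factor can be absorbed into the exponent: given the target exponent $3/2+\eps$, I would run the subroutine with a strictly smaller parameter $\eps/2$, so that $n^{3/2+\eps/2}\log^2 n = O(n^{3/2+\eps})$ for $n$ large. This establishes the claimed $O(n^{3/2+\eps})$ running-time bound.

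For the Las Vegas guarantee and the output of cuts, I would emphasize that correctness is entirely deterministic: no matter how the random samples $E_0 \supset E_1 \supset \cdots$ turn out, the algorithm of the lemma either certifies non-carveability by exhibiting a triple of edges as in \cref{obs:triple_edges}, or it computes the exact extreme half-planes $\Halfplanes{}_0$, from which a separating half-plane satisfying \cref{thm:half-plane-cutting-equivalence}(c) can be read off per facial triangle in $O(1)$ time each. Hence the randomness affects only the running time, which is bounded by $O(n^{3/2+\eps})$ \whp{} via the Clarkson--Shor tail estimates already invoked in the lemma. If a clean expected-time bound is wanted as well, one can fall back to the deterministic $O(n^2)$ procedure of \cref{thm:quad-half-plane} whenever an intermediate sample exceeds its high-probability size; as this happens with only polynomially small probability, the expected running time is unaffected.

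The main obstacle I anticipate is not the final substitution but making the ``absorb the logs'' step precise while keeping the quantifier order on $\eps$ honest: the parameter fed to the Agarwal--Matou\v{s}ek structure must be chosen as a function of the prescribed target $\eps$, and one must verify that the two independent sources of polylogarithmic overhead --- the per-level ray-shooting cost and the $O(\log n)$ number of levels --- are simultaneously swallowed by a single $n^{\eps/2}$ slack. This is routine but worth spelling out so that the stated exponent is genuinely $3/2+\eps$ for an arbitrary fixed $\eps > 0$.
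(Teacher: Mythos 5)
Your proposal is correct and follows essentially the same route as the paper: instantiate the lemma with the Agarwal--Matou\v{s}ek ray-shooting bound $T(n,k)=O(n^{3/2+\eps}+n^{1/2+\eps}\cdot k)$, take $r=O(1)$, and absorb the resulting polylogarithmic factors into the $n^{\eps}$ slack. Your added remarks on the Las Vegas guarantee (randomness affecting only the running time) and the explicit handling of the quantifier on $\eps$ are more detailed than the paper's one-line derivation but do not change the argument.
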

We note that using the intersection reporting data structure between triangles\footnote{A half-plane can be simulated by a sufficiently large triangle.} and line segments in $\R^3$ by Ezra and Sharir~\cite{EzraS21} gives a better runtime of $T(n, k) = O(n^{3/2+\eps} + k\log n)$, but does not improve the overall runtime of our algorithm. 
It is plausible to believe that this exponent of ${3/2}$ is the best we can hope for due to lower bounds for Hopcroft's problem in 3D~\cite{Erickson96}.
\section{Ray Sweeps}
\label{sec:ray-cuts}

In this section, we consider 
cutting material with rays.
This models cuts that can be performed with various kinds of tools,
such as a powerful waterjet \footnote{\url{https://youtu.be/pemgwRrCs78}},
or a
laser cutter \footnote{\url{https://youtu.be/J2oyk3ck8Z8}}.
Given a target
polytope $P$, we wish to devise a set of cuts
to carve $P$ out of arbitrary initial
material $C \supset P$. In particular, $C$ is cut into 
pieces, and one of the pieces is $\Int{P}$ (and the remaining are leftovers which can 
be discarded). However, we would like this to be independent of $C$, so we may use the same 
set of cuts to carve $P$ from different pieces of initial material.
We also classify shapes $P$ that admit such a carving.

\begin{figure}
    \centering
    \includegraphics[scale=0.22]{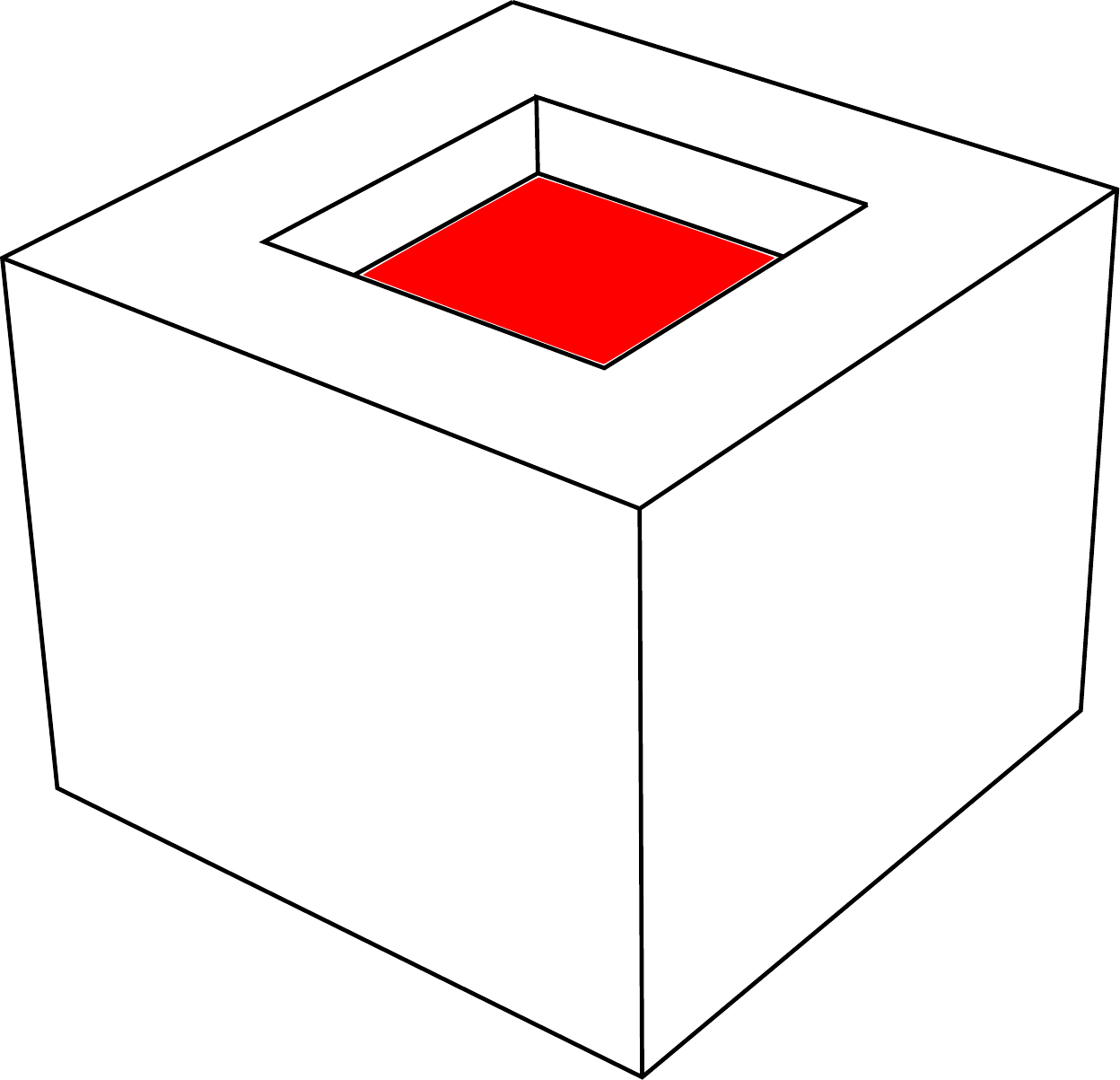}
    \caption{A polytope for which all faces are externally visible that cannot be cut with ray sweeps,
    since ray sweeps require bounded length.
    In particular, the shaded face would require a ray sweep of infinite length (i.e., a space-filling curve).}
    \label{fig:externally-visible-cant-cut}
\end{figure}

\subsection{Model}
We would like our model to capture the finite set of cuts that can be made using
the tools described previously, and they are commonly operated by moving
in a sweeping motion to separate material. Thus, we define our cuts as a set of \textit{ray sweeps}. Consider a ray $R$ defined by an endpoint $a$ and interior point $b$. A sweep is an interpolation where $a$ or $b$ (or both) travel along an arbitrary continuous path of bounded length. 

The reason we require bounded length is because without it, our model would allow us to cut
entire faces using the endpoint of a ray,
via space-filling curves.
\Cref{fig:externally-visible-cant-cut}
is an example of a polytope which \emph{cannot} be cut using ray sweeps,
specifically because of the bounded length requirement.
Without this requirement, any polytope for which all faces are entirely externally visible could be carved.
We use this more restrictive model because it is more representative of how an actual machine could be used
to get flat faces.

Given a (triangulated) polytope $P$, we determine if there is a set of ray sweeps \Rays{} such that, for any $P$-containing set $C\supset P$, the set $C \setminus \del{\cap_{R \in \Rays{}} R }$ has a connected component (i.e., maximal connected open subset) equal to $\Int{P}$.
If such a set of ray sweeps exists, then we also will be able to output it.

As in \cref{subsec:quadratic},
since we only allow finitely many sweeps of bounded length
(and hence do not 
permit 
space-filling curves),
it suffices to ask
if each facial triangle can be cut independently.
Thus,
in this section, we attempt to solve the following
two-dimensional problem:
Given a triangle $T\subset\R^2$ on the plane,
and a set $B_T\subset\R^2$ which is the disjoint union
of simple polygons,
is there a set of ray sweeps \Rays{} inside the plane
such that $T \subset \bigcup_{R\in \Rays{}}R$
and each $R\in \Rays$ has $\Rays\cap\Int{B_T}=\varnothing$?
If we can solve this 2D problem
in time $t(k)$,
where $k$ is the number of vertices forming the polygons of $B_T$,
then we can
classify triangulated polytopes $P$
that can be carved using ray sweeps
in time $O(n\cdot t(n))$.

In fact, we show that sweeps of a special form suffice:
A \emph{linear ray sweep} is one for which the endpoint $a$ can be linearly interpolated between two points,
and the interior point $b$ is constant.

\begin{figure}
    \centering
    \includegraphics[scale=0.75,page=1]{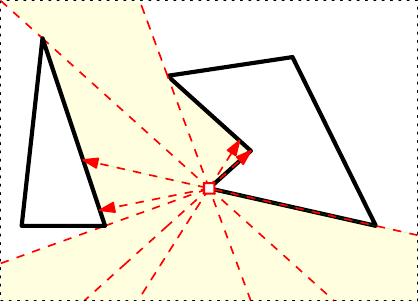}
    \caption{An example of the angle sweep performed in the proof of \cref{thm:find-rays} around a vertex $v$.
        The events are denoted with the red lines/rays, and the valid ray sweeps are coloured.}
    \label{fig:angle-sweep-example}
\end{figure}

\subsection{2D Cuttable Regions}

Compared to \cref{sec:half-plane},
our approach in this section is reversed.
Rather than directly checking if each triangle can be carved,
we first map out the carveable regions along the plane containing
the triangle,
and then afterwards we check if the triangle itself is contained within
those regions.

\begin{theorem}
\label{thm:find-rays}
Given a set $B\subset\R^2$ which is the disjoint union of simple polygons
with a total of $k$ vertices, there is an $O(k^2\log k)$  
time algorithm that can find a set of linear ray sweeps
with total complexity $O(k^2)$, the union of which is exactly
the union of all rays $R \subset \Rays{}$ such that $R \cap \Int{B} = \varnothing$.
\end{theorem}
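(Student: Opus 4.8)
The plan is to characterize the region $U\subseteq\R^2$ swept out by all \emph{valid} rays (those $R$ with $R\cap\Int{B}=\varnothing$) purely in terms of the vertices of $B$, and then to realize $U$ as a union of linear ray sweeps anchored at those vertices. The first step is a structural lemma: a point $p\notin\Int{B}$ lies in $U$ if and only if some vertex $v$ of $B$ admits a valid ray through $v$ that also contains $p$. One direction is immediate. For the reverse, given any valid ray through $p$, I would rotate it about $p$ until it first grazes $\Int{B}$; since the set of free directions at $p$ is a union of circular arcs whose endpoints correspond to rays tangent to $B$, the extremal ray must pass through a vertex $v$ of $B$ (or run flush with an edge, hence through its endpoints). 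It therefore suffices to describe, for each vertex $v$, the set $V_v$ of points lying on valid rays through $v$, and to argue $U=\bigcup_v V_v$.

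Next, for a fixed vertex $v$ I would perform the angular sweep depicted in \Cref{fig:angle-sweep-example}. Rotating a direction $d$ around $v$, I maintain two quantities: whether the forward ray $\{v+td:t\ge 0\}$ avoids $\Int{B}$ (the set of such $d$ forms the free-direction set $F_v$), and, when it does, the first edge $e(d)$ of $B$ struck by the backward ray $\{v-td:t\ge 0\}$. Both change only at the $O(k)$ directions for which the line through $v$ passes through another vertex of $B$, so sorting these events costs $O(k\log k)$ per vertex. On each maximal angular interval where $d\in F_v$ and the backward blocker $e$ is fixed, the valid rays through $v$ are exactly the segments from $a(d)$ through $v$ extended to infinity, where $a(d)$ is the intersection of the backward ray with the supporting line of $e$. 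The observation that makes this a \emph{linear} ray sweep is that $a(d)$ traces a straight segment along the supporting line of $e$ as $d$ varies, while $b=v$ stays fixed; thus each such interval contributes a single linear ray sweep whose body covers both the forward fan beyond $v$ and the backward segment up to $e$.

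The main obstacle is the bounded-length requirement in directions where the backward ray never meets $\Int{B}$ (i.e.\ the line through $v$ in direction $d$ is entirely free): there $a(d)$ would escape to infinity. I would resolve this by noting that such a fully free line is also swept from the opposite direction $-d\in F_v$, so its two halves are covered by the forward rays of the two opposite intervals; equivalently, enclosing $B$ in a large bounding box and treating its edges as terminal blockers keeps every endpoint path a bounded segment while leaving the infinite rays — and hence the unbounded portions of $U$ — fully covered. Finally I would verify correctness in both directions: every constructed sweep ray is free, so the union is contained in $U$, while the structural lemma together with the exhaustiveness of the angular intervals gives the reverse inclusion, so the union is exactly $U$. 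Summing $O(k)$ intervals over the $O(k)$ vertices yields total complexity $O(k^2)$ and running time $O(k^2\log k)$, as claimed in \Cref{thm:find-rays}.
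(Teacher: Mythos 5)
Your proposal is correct and follows essentially the same route as the paper's proof: an angular sweep about each vertex $v$ with events at the other vertices, maintaining the forward free directions and the first backward blocking edge so that each maximal interval yields one linear ray sweep pivoting at $v$, with correctness via rotating an arbitrary valid ray about a point on it until it passes through a vertex of $B$. Your explicit treatment of the case where the backward ray meets no blocker (covering the full line by the two opposite forward rays, or via a bounding box) is a detail the paper glosses over, but it does not change the approach or the $O(k^2\log k)$ bound.
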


\begin{proof}
For each vertex $v$ of $B$, we compute a set of linear ray sweeps
$A_v$, each of which passes through $v$,
and none of which
intersect $\Int{B}$.
At a high level, this algorithm
rotates a line passing through $v$,
and maintains the maximal rays in both directions through the line (if any) that
include $v$ and do not intersect $\Int{B}$. 

For the vertex $v$ in $B$, in order to compute $A_v$, we perform an angle sweep
around $v$, where we rotate a line that at all times passes through $v$.
The events of our angle sweep are the set of other vertices in $B$.
Compute the ray starting from $v$ through every other vertex in $B$,
and sort all other vertices in $B$ by the angles of those rays (with an arbitrary branch cut).
The other vertices form the events of our angle sweep.
Now, perform an angle sweep that continuously sweeps a line $l$,
while maintaining a data structure containing
the order of all edge interiors in $B$ (i.e., edges without their endpoints)
intersected crossed by $l$, along with their order relative to $v$ along $l$.
Any standard binary search tree suffices for this purpose.
Note that the wording \emph{crossed} implies that it is okay for $l$ to intersect an edge
interior to which it is parallel.
Between two events,
we check if along $l$ there is a ray from $v$ to infinity
that does not intersect any edge interiors in $B$
(i.e., we check if $v$ is either the first or last element along $l$ according to our data structure).
If so, we have found a linear ray sweep between these two events.
Extend the ray backwards to the first/last element along $l$ (depending on the direction)
at both events.
The linear ray sweep then linearly interpolates between
those two extended rays,
using $v$ as a pivot point. See \cref{fig:angle-sweep-example} for an illustration.

To show the correctness of this algorithm,
first observe that any ray passing through a vertex of $B$
is contained in one of the linear ray sweeps.
Then,
we claim that any valid ray $V^*$ not intersecting a vertex of $B$ can
be cut using rays which intersect at least one vertex of $B$.
To see this,
let $V \subseteq V^*$ be a ray with endpoint $p \in V^*$, an arbitrary point cut by $V^*$. Consider rotating $V$ around $p$ until it hits a vertex (possibly many) of
$B$, call this new ray $V'$. Observe that $V'$ intersects a vertex of $B$ as desired, and still contains $p$.
Since $p$ was an arbitrary point of $V^*$, the claim follows.
Hence, our algorithm produces
linear ray sweeps that together cut all rays not intersecting $B$.

Since there are $k$ vertices and each angle sweep takes $O(k\log k)$ time
both to sort and to perform,
this algorithm runs in $O(k^2\log k)$ time in total.
\end{proof}

\begin{figure}
    \centering
    \includegraphics[scale=0.75,page=2]{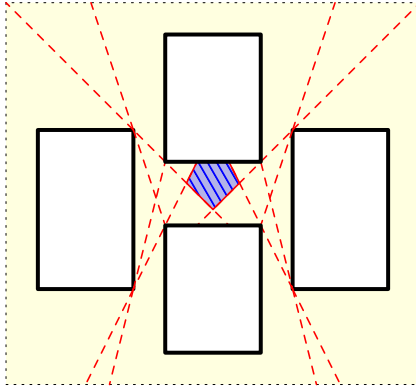}
    \caption{An example of a pentagon that cannot be cut with ray sweeps for a given set of obstacles.
    The bordering linear ray sweeps 
    are shown.}
    \label{fig:angle-sweep-example2}
\end{figure}

\subsection{Decision Algorithm}
\begin{theorem}
\label{thm:raycut-algorithm-2d}
Let $B$ be a set of interior-disjoint simple polygons in $\mathbb{R}^2$
with a total of $k$ vertices,
let $T$ be a triangle in $\mathbb{R}^2$.
Then,
in $O(k^4)$ time,
we can check if there is a set of ray sweeps
$\Rays{}$
such that
$T\subset\cup_{R\in\Rays{}}R$
so that each ray sweep $R\in\Rays{}$ has $R\cap\Int{B}=\varnothing$.
If there is, then we can also output it in the same time complexity.
\end{theorem}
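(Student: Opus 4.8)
The plan is to convert the covering question into a single containment test. Let $U := \bigcup\{R : R\cap\Int{B}=\varnothing\}$ be the \emph{carveable region}, the union of all rays avoiding the interior of $B$. I claim $T$ admits a valid covering by ray sweeps if and only if $T\subseteq U$. Indeed, any valid ray sweep is a union of valid rays, so any valid covering of $T$ lies inside $U$, forcing $T\subseteq U$; conversely, \cref{thm:find-rays} produces a family $\mathcal{S}$ of $O(k^2)$ linear ray sweeps with $\bigcup\mathcal{S}=U$, each sweep consisting only of valid rays, so if $T\subseteq U$ then $\mathcal{S}$ itself covers $T$ and may be output in $O(k^2)$ time. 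Thus everything reduces to deciding $T\subseteq U$, given the $O(k^2)$ sweeps from \cref{thm:find-rays} (which we compute in $O(k^2\log k)$ time).

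The geometric fact that makes this efficient is that each linear ray sweep occupies a region of constant descriptive complexity. With pivot $v$ fixed and the endpoint moving linearly from $a_0$ to $a_1$, the region swept by the rays through $v$ is the triangle $\CH{\{a_0,a_1,v\}}$ together with the infinite cone at apex $v$ spanned by the directions $v-a_0$ and $v-a_1$; in particular its boundary is made of $O(1)$ segments and rays. Hence $U$ is the union of $O(k^2)$ regions each of $O(1)$ complexity, and $\partial U$ consists of $O(k^2)$ segments and rays.

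To decide $T\subseteq U$, I would clip each of the $O(k^2)$ regions to $T$ (each clip costs $O(1)$ and stays of $O(1)$ complexity) and build the arrangement $\mathcal{A}$ of the $O(k^2)$ resulting boundary segments inside $T$. Since $m=O(k^2)$ segments have at most $O(m^2)=O(k^4)$ pairwise intersections, $\mathcal{A}$ has $O(k^4)$ vertices, edges, and faces, and can be built in $O(k^4)$ time by standard arrangement techniques \cite{BergCKO08}. I would then run a BFS/DFS over the faces of $\mathcal{A}$ while maintaining a \emph{coverage counter} equal to the number of ray-sweep regions containing the current face: the counter is initialized at one seed face by brute-force testing all $O(k^2)$ regions in $O(k^2)$ time, and is updated by $\pm1$ whenever the traversal crosses an edge of $\partial U$, according to whether that crossing enters or leaves the region owning the edge. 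A face is covered exactly when its counter is positive, and $T\subseteq U$ holds iff every face of $\mathcal{A}$ is covered. Since the traversal is linear in $|\mathcal{A}|$, the total running time is $O(k^4)$, and if the test succeeds we output $\mathcal{S}$.

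The main obstacle is the bookkeeping of the coverage counter under degeneracies: collinear or overlapping boundary edges shared by several regions, rays that merely graze $\partial B$ (which are valid and must count toward coverage), and the open/closed distinction between $U$ and its boundary. I would handle these by orienting every boundary edge consistently with the region that generated it so that each crossing yields an unambiguous signed change, by breaking ties among coincident edges with a fixed ordering, and by treating boundary-grazing rays as valid throughout in accordance with the condition $R\cap\Int{B}=\varnothing$. Establishing that these conventions make the counter exactly track membership in $\bigcup\mathcal{S}=U$ is the one place where care is genuinely required; the complexity analysis and the remaining algorithmic steps are otherwise routine.
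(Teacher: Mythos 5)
Your proposal is correct and follows essentially the same route as the paper: invoke \cref{thm:find-rays} to get $O(k^2)$ constant-complexity linear ray sweeps whose union is the carveable region, build the $O(k^4)$-size arrangement of their boundaries, traverse the cells while tracking coverage, and accept iff every cell meeting $T$ is covered. Your minor variations (clipping to $T$ first, and maintaining a coverage counter rather than the set of containing regions) do not change the argument or the $O(k^4)$ bound.
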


\begin{proof}
Apply
\cref{thm:find-rays}
to get a set of $O(k^2)$
linear ray sweeps whose union is equal to
the set of all area that can be carved via rays that do not intersect $\Int{B}$.
Take the arrangement of all lines containing each ray or line segment
forming the boundary of each ray sweep.
This arrangement can be computed as a doubly-connected edge list in $O(k^4)$ time
using a standard incremental construction~\cite[Theorem 8.6]{BergCKO08}.

For each edge in the graph formed by the arrangement, we include information
about which ray sweep boundaries contain this edge.
Then, we can traverse the cells of the arrangement with a
depth-first search,
while maintaining the current set of regions in which the current cell is contained,
updating as we traverse each edge.
We record, for each cell, whether or not it is part of any region.
In this way, we obtain information about which cells
are included in the union of the regions
in $O(k^4)$ time (linear in the complexity of the arrangement).


Finally, consider each cell of the arrangement.
Check if the cell intersects $T$.
If so, check if it is marked as
being in the interior of at least one linear ray sweep that can be carved.
If it is not, then $T$ cannot be carved.
If we determine that all cells intersecting $T$
can be carved, then $T$ itself can also be carved.
The time complexity to check if each cell intersects $T$
is also $O(k^4)$,
and hence the total time complexity of this algorithm is $O(k^4)$, as desired.
\end{proof}

\begin{cor}
\label{cor:raycut-algorithm-3d}
Let $P$ be a triangulated polytope with $n$ faces.
Then, there is an algorithm running
in $O(n^5)$ time which can determine
if $P$ can be carved with ray sweeps.
Moreover, if it can, then
the algorithm can output a set of linear ray sweeps
carving $P$.
\end{cor}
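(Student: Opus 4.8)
The plan is to combine the 3D-to-2D reduction set up at the start of this section with the planar decision algorithm of \Cref{thm:raycut-algorithm-2d}. Recall that because we permit only finitely many ray sweeps of bounded length (which rules out space-filling curves), a polytope $P$ is carveable with ray sweeps if and only if each of its facial triangles can be carved independently. Thus it suffices, for each facial triangle $T$, to build and solve a planar instance $(T, B_T)$, and to report that $P$ is carveable precisely when every facial triangle is carveable.

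First I would set up each planar instance. For a facial triangle $T$ lying in the plane $L_T$, the rays carving $T$ must lie in $L_T$ (so that the resulting cut is flat along $T$) and must avoid $\Int{P}$, so the relevant obstacle is the cross-section $B_T = \Int{P} \cap L_T$. Exactly as in \Cref{obs:edges}, the boundary of this cross-section is determined by the $O(n)$ edges of $P$ that cross $L_T$, so $B_T$ is a planar polygonal region with $k = O(n)$ total boundary vertices. It can be assembled from these $O(n)$ boundary segments in $O(n\log n)$ time, which is dominated by the subsequent step.

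Next I would run \Cref{thm:raycut-algorithm-2d} on each instance $(T, B_T)$. Each call runs in $O(k^4) = O(n^4)$ time and decides whether $T$ is carveable, outputting a covering set of linear ray sweeps when it is. If some facial triangle is not carveable we report that $P$ is not carveable; otherwise we take the union of the linear ray sweeps returned for each triangle as a carving of $P$. Summing over the $O(n)$ facial triangles yields the claimed $O(n \cdot n^4) = O(n^5)$ running time, and the output is the collected set of linear ray sweeps.

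The main obstacle is confirming the two facts that make this reduction clean. The subtler one is that the cross-section of a non-convex polytope (for instance one with a cavity, as in \Cref{fig:ray-only-simple}) may be a union of polygons \emph{possibly with holes}, rather than a disjoint union of strictly simple polygons as literally stated in \Cref{thm:find-rays,thm:raycut-algorithm-2d}. I would argue that this causes no difficulty: the angle sweep of \Cref{thm:find-rays} and the arrangement traversal of \Cref{thm:raycut-algorithm-2d} depend only on the boundary segments of $B_T$ and on a well-defined interior $\Int{B_T}$, never on the topological type of the region, so they apply verbatim with the same $O(k^4)$ bound and $k = O(n)$. The remaining point, the independence of the per-triangle subproblems, is exactly the bounded-length argument recalled above, and it is what licenses solving each facial triangle in isolation.
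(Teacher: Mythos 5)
Your proposal matches the paper's own proof: reduce to one planar instance per facial triangle using the cross-section $B_T = \Int{P}\cap L_T$, invoke \cref{thm:raycut-algorithm-2d} on each, and multiply the $O(n^4)$ per-triangle cost by the $O(n)$ triangles. Your added remarks on handling the empty/holed cross-section cases and on assembling $B_T$ from the crossing edges are consistent elaborations rather than a different route.
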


\begin{proof}
Consider each of the facial triangle $T$ of $P$ separately.
Let $L_T$ be the plane containing $T$,
and compute the set of disjoint open polygonal regions $B_T=L_T\cap\Int{P}$.
If $B_T$ is empty, then $T$ can be cut with a single ray sweep.
If $B_T$ is non-empty, apply 
\cref{thm:raycut-algorithm-2d} to determine if $T$ can be cut with ray sweeps.
Maintain a list of ray sweeps used,
to be output if all triangles can be cut with ray sweeps.
There are $n$ facial triangles,
hence this algorithm runs in $O(n^5)$ time.
\end{proof}

\section{Conclusion}
In this paper we discussed two models (half-planes and ray sweeps)
of carving three-dimensional polytopes.
We focused on the decision variant of each,
while retaining the ability to generate a list 
of cuts when the input polytopes are carveable.
Interestingly, even when a polytope $P$ is not carveable
our algorithms can all be 
modified to 
find a minimal carveable polytope that contains $P$.
This could be quite useful in real-world applications, where a small number of additional cuts could be made using a more specialized tool.

There are several natural resulting open questions from our work:
\begin{itemize}[noitemsep,topsep=0pt,parsep=0pt,partopsep=0pt]
    \item Is there a deterministic algorithm for half-plane carving running in subquadratic time?
    \item Is there a faster algorithm for 2D ray sweep carving that makes use of the triangles to be carved directly?
    \item Are there efficient algorithms for optimization variants of our problems? Either minimizing total length of cuts  
    or minimizing the number of cuts.
\end{itemize}

\small
\bibliographystyle{abbrv}
\bibliography{citations}

\end{document}